\newcommand{\mathset}[1]{\mathbbm{#1}}
\newcommand{\setN}{\mathset{N}}
\newcommand{\setR}{\mathset{R}}
\newcommand{\setC}{\mathset{C}}
\newcommand{\abs}[2][]{#1\lvert #2#1\rvert}
\renewcommand{\d}{\partial}
\newcommand{\dd}[2]{\frac{\partial #1}{\partial #2}}
\renewcommand{\phi}{\varphi}
\renewcommand{\epsilon}{\varepsilon}
\renewcommand{\otimes}{\varotimes}
\newcommand{\sotimes}{\mathbin{\raise1.5pt\hbox{
      $\scriptscriptstyle\otimes$}}}
\DeclareMathOperator{\Aut}{Aut}
\DeclareMathOperator{\Ext}{Ext}
\DeclareMathOperator{\Int}{Int}
\theoremstyle{plain}
\newtheorem{proposition}{Proposition}
\newtheorem{theorem}{Theorem}
\newtheorem{definition}{Definition}
\theoremstyle{definition}
\newtheorem{example}{Example}
\let\@xp\expandafter 
\newcommand\DefineFancyrefPrefix[2]{%
  \@namedef{fancyref#1labelprefix}{#1}%
  \@namedef{Fref#1name}{#2}%
  \@namedef{fref#1name}{\MakeLowerCase{\@nameuse{Fref#1name}}}%
  \def\@style{vario}%
  \@xp\@xp\@xp\frefformat\@xp\@xp\@xp\@style\@xp\csname
  fancyref#1labelprefix\endcsname
  {%
    \@nameuse{fref#1name}\fancyrefdefaultspacing##1##3%
  }%
  \@xp\@xp\@xp\Frefformat\@xp\@xp\@xp\@style\@xp\csname
  fancyref#1labelprefix\endcsname
  {%
    \@nameuse{Fref#1name}\fancyrefdefaultspacing##1##3%
  }%
  \def\@style{plain}%
  \@xp\@xp\@xp\frefformat\@xp\@xp\@xp\@style\@xp\csname
  fancyref#1labelprefix\endcsname
  {%
    \@nameuse{fref#1name}\fancyrefdefaultspacing##1%
  }%
  \@xp\@xp\@xp\Frefformat\@xp\@xp\@xp\@style\@xp\csname
  fancyref#1labelprefix\endcsname
  {%
    \@nameuse{Fref#1name}\fancyrefdefaultspacing##1%
  }%
}
\begin{document}

\title{A Universal Formula for Deformation Quantization \\ on K\"ahler
  Manifolds}

\author{Niels Leth Gammelgaard}

\maketitle

\begin{abstract}
  We give an explicit local formula for any formal deformation
  quantization, with separation of variables, on a K\"ahler
  manifold. The formula is given in terms of differential operators,
  parametrized by acyclic combinatorial graphs.
\end{abstract}

\chapter{Introduction}

Among the first to systematically develop the notion of deformation
quantization were Bayen, Flato, Fronsdal, Lichnerowicz and
Sternheimer. In \cite{MR0496157} and \cite{MR0496158}, they developed
the notion of quantization as a deformation of the commutative algebra
of classical observables through a family of non-commutative products
$\star_h$, parametrized by a real parameter $h$, and gave an
independent formulation of quantum mechanics using this notion.

As opposed to other approaches to quantization, such as geometric
quantization, the theory of deformation quantization does not attempt
to construct a space of quantum states, but focuses the algebraic
structure of the space of observables.

Much work has been done on the theory of deformation quantization, and
it's formal counterpart, where $h$ is interpreted as a formal
parameter. In its most general context, deformation quantization is
studied on Poisson manifolds. In \cite{MR2062626}, Kontsevich proves
the existence of a formal deformation quantization on any Poisson
manifold. Moreover, he gives a formula for a deformation quantization
of any Poisson structure on $\setR^n$. His formula describes the star
product in terms of bidifferential operators parametrized by graphs
and with coefficients given by integrals over appropriate
configuration spaces.  This bears resemblance in flavour to the
construction presented in this paper, which is also based on a certian
interpretation of graphs as differential operators.

Other significant constructions of star products include the
geometrical construction by Fedosov in \cite{MR1293654}, where he
constructs a deformation quantization on an arbitrary symplectic
manifold. Moreover, we should mention the work of Schlichenmaier
\cite{MR1805922}, where he uses the theory of Toeplitz operators to
construct a deformation quantization on any compact K\"ahler manifold.

The question of existence and classification of deformation
quantizations on an arbitrary symplectic manifold was solved by De
Wilde and Lecomte in \cite{MR728644}, where they show that equivalence
classes of star products are classified by formal cohomology
classes. On K\"ahler manifolds, existence and classification was
addressed by Karabegov in \cite{MR1408526}, where he proves that
deformation quantizations with separation of variables are classified,
completely and not only up to equivalence, by closed formal
$(1,1)$-forms, which he calls formal deformations of the K\"ahler
form.  In this paper, we shall be dealing exclusively with deformation
quantizations, with separation of variables, on K\"ahler manifolds.

In this setting, Berezin \cite{MR0395610} originally wrote down
integral formulas for a star product, but he had to make severe
assumptions on the K\"ahler manifold. By interpreting Berezin's
integral formulas formally, and studying their asymptotic behavior,
Reshetikhin and Takhtajan \cite{MR1772294} gave an explicit formula,
in terms of Feynman graphs,
for a formal deformation quantization on any K\"ahler manifold.

Reshetikhin and Takhtajan applied the method of stationary phase to
Berezin's integrals to obtain the asymptotic expansion, and the
description in terms of Feynman graphs arises in a natural way through
this approach. However, the graphs produced by the expansion of
Berezin's integrals have relations among them, expressing fundamental
identities on the K\"ahler manifold. Moreover, the expansion produces
disconnected graphs which prevent the star product from being
normalized.

Using the general existence of a unit,
Reshetikhin and Takhtajan defined a normalized version of the star
product. The coefficients of the unit for the non-normalized star
product can be determined inductively by solving the defining
equations for the unit, but this approach does not yield an explicit
formula for the unit in terms of Feynman graphs, and consequently such
a formula for the normalized star product was not given.

The present paper grew out of an attempt to find an explicit formula
for this normalized star product of Reshetikhin and Takhtajan in terms
of graphs. The crucial observation is that relations among the graphs,
as well as the fact that the star product is not normalized, are caused by
graphs with cycles. 

Given a formal deformation of the K\"ahler form,
we present a local formula for a star product on a K\"ahler manifold
by interpreting graphs as differential operators in a way which is
very similar to \cite{MR1772294}, but we restrict attention to graphs
without cycles. We show that the formula in fact defines a global
deformation quantization on the K\"ahler manifold, with classifying
Karabegov form given by the formal deformation of the K\"ahler form
used in the definition of the star product. Thus our construction
gives a local formula for any deformation quantization, with
separation of variables, on a K\"ahler manifold.

The main result of the paper is stated in the following theorem.

\begin{theorem}
  \label{thm:4}
  The unique formal deformation quantization on $M$ with Karabegov
  form $\omega$ is given by the local formula
  \begin{align*}
    f_1 \star f_2 = \sum_{G \in \mathcal{A}_2} \frac{1}{\abs{\Aut(G)}}
    \Gamma_G(f_1, f_2) h^{W(G)},
  \end{align*}
  for any functions $f_1$ and $f_2$ on $M$.
\end{theorem}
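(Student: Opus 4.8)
The plan is to lean on Karabegov's classification theorem, which asserts that deformation quantizations with separation of variables on a Kähler manifold are in complete bijection with closed formal $(1,1)$-forms. Granting this, it suffices to prove that the right-hand side defines, at least locally, an associative product with separation of variables deforming the pointwise product, and that its Karabegov form equals $\omega$; coordinate-independence, global existence, and uniqueness then all follow from the bijectivity of the classification. Thus the theorem splits into four verifications: (i) the sum is a well-defined formal bidifferential deformation of pointwise multiplication, (ii) it has separation of variables, (iii) it is associative, and (iv) its Karabegov form is $\omega$.

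First I would dispose of (i). Since $W(G) \to \infty$ as the graphs grow and only the trivial graph has $W(G) = 0$, the sum is a well-defined element of $C^\infty(M)[[h]]$ with $f_1 \star f_2 = f_1 f_2 + O(h)$, and each $\Gamma_G$ is by construction a bidifferential operator. For (ii), I would show that every nontrivial graph carries, on each edge incident to a given external vertex, a derivative of fixed type ($\partial$ or $\bar\partial$); consequently $a \star f = af$ whenever $a$ is holomorphic and $f \star b = fb$ whenever $b$ is antiholomorphic, which is exactly the separation-of-variables property in the paper's convention.

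The hard part will be (iii), associativity. I would expand both $(f_1 \star f_2) \star f_3$ and $f_1 \star (f_2 \star f_3)$ as sums over acyclic graphs with three external legs, the composition of two bidifferential operators corresponding to gluing two graphs along their shared vertex. The goal is a purely combinatorial identity: after collecting the automorphism factors $1/\abs{\Aut(G)}$ and the weights $h^{W(G)}$, both expansions are indexed by the same family of three-input acyclic graphs with identical coefficients. Here the \emph{acyclicity} restriction is essential — it is precisely what prevents spurious cyclic contributions and makes the two gluings match — and the normalization by $\abs{\Aut(G)}$ is what forces the coefficients to agree, in the familiar style of a sum over graphs reorganized through its connected components. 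I expect the bulk of the work to be the bookkeeping showing that $W$ and the automorphism groups behave correctly under gluing.

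Finally, for (iv) I would extract the Karabegov form from the low-weight part of $\star$, using Karabegov's characterization in terms of left multiplication by local Kähler potentials, and check by computing the relevant small graphs that the resulting closed formal $(1,1)$-form is $\omega$. Combining (i)--(iv), Karabegov's theorem identifies the local formula with the restriction of the unique global deformation quantization having Karabegov form $\omega$; in particular the formula is independent of coordinates, glues globally, and equals that unique star product, which is the assertion of the theorem.
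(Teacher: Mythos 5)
Your architecture is exactly the paper's---local well-definedness, separation of variables, combinatorial associativity, computation of the Karabegov form, then Karabegov's classification to deliver globality, coordinate-independence and uniqueness---but the two steps you defer as ``bookkeeping'' are where all the content lies, and your specific guesses about them do not hold up. For associativity, you expect the $1/\abs{\Aut(G)}$ normalization to ``force the coefficients to agree'' under gluing. The paper finds the opposite: the automorphism count behaves badly under the required graph surgeries, and the $\Aut$-normalized expression is explicitly abandoned as unsuitable. What works is to re-expand $D$ over isomorphism classes of \emph{labelled circuit graphs} (edge labels together with total orders of the incoming and of the outgoing edges at each vertex), each weighted by $1/C(G)$ with $C(G)=\prod_{v} \alpha(v)!\,\beta(v)!$, and to introduce a \emph{fusion} operation: cut out the second external vertex of $G_1$ and attach its loose ends either to vertices of $G_2$ or to edges of $G_2$, the latter creating internal vertices of weight $-1$ that absorb the derivatives of the inverse metric $g^{\bar q p}$ produced by the Leibniz rule in $\Lambda_{G_1}(f_1,\Lambda_{G_2}(f_2,f_3))$. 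The proof then rests on the fact (\Fref{prop:2}) that the fusion sets $\mathcal{F}(G_1,G_2)$ \emph{partition} $\mathcal{L}^C_3$, where $G_1$ is only determined up to reordering the circuit structure at its second external vertex---a subtlety for which your sketch has no counterpart. None of this is visible from, or recoverable within, the $\Aut$-weighted sum, so as written your step (iii) would stall; this is also not the ``connected components'' style of graph resummation you allude to.

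Your step (iv) has a similar but sharper gap: the Karabegov form cannot be ``checked by computing the relevant small graphs.'' It is defined through formal functions $\Psi^r$ solving $\Psi^r \star z^s - z^s \star \Psi^r = \delta^{rs}$, an identity that must hold at \emph{every} order of $h$. The paper takes $\Psi^r = \d\Phi/\d z^r$ and must prove $\sum_{l=-1}^{k-1} D_{k-l}(\Psi^r_l, z^s) = 0$ for all $k\ge 1$; this is done by the \emph{budding} bijection, which converts the first external vertex of a graph into an internal vertex of weight $l$ and attaches a new first external vertex by a single edge, preserving $\abs{\Aut}$ and reversing the sign of the partition function, so that the infinitely many contributions cancel in pairs. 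This is an all-orders combinatorial argument, not a finite low-weight verification, and nothing in your proposal supplies it. Your remaining points---(i), (ii), and the concluding appeal to Karabegov's classification---are correct and coincide with the paper's treatment.
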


The various ingredients of this theorem and the formula will be
introduced in the following sections, as the definitions of graphs and
their partition functions are a bit more involved than what is
suitable for the introduction. At this point, let us instead give an
overview of the organization of the paper and point to the sections
where the relevant notions are introduced.

In the next section, we introduce the notion of deformation
quantization, and establish some basic notation. Moreover, we recall
how the classifying Karabegov form of a star product with separation
of variables is calculated. Then, we move on to describe the relevant
types of graphs in section \ref{cha:graphs}, where the set
$\mathcal{A}_2$ of acyclic weighted graphs and the total weight
$W(G)$ of a graph $G$ are also defined. The interpretation of a graph
$G \in \mathcal{A}_2$ as a bidifferential operator is defined in
\hbox{section \ref{cha:partition-functions}} through the partition
function $\Gamma_G(f_1, f_2)$, which depends on a choice of local
holomorphic coordinates and a formal deformation $\omega$ of the
K\"ahler form.  It is by no means clear that the formula in
\Fref{thm:4} defines an associative product, and we will need to
rewrite the formula in terms of partition functions of graphs with
more structure to prove associativity. This is done in section
\ref{cha:altern-expr-d}, and associativity is then proved in section
\ref{cha:associativity}, using only combinatorial
considerations. Finally, the Karabegov form of the local product
defined by the formula in \Fref{thm:4} is calculated in section
\ref{cha:coord-invar-class}, and the proof of the theorem is
concluded.

\chapter{Deformation Quantization and K\"ahler Manifolds}

A Poisson structure on a smooth manifold $M$ is a skew-symmetric
bilinear map $\{\cdot, \cdot \} \colon C^\infty(M) \times C^\infty(M)
\to C^\infty(M)$ satisfying the Jacobi identity and the Leibniz rule,
\begin{align*}
  \{f_1, f_2f_3 \} = \{ f_1, f_2 \} f_3 + f_2 \{f_1, f_3\},
\end{align*}
with respect to multiplication of functions.

Deformation quantization makes sense for general Poisson
manifolds. Let $h$ be a formal parameter, and consider the space
$C_h^\infty(M) = C^\infty(M) [[h]]$ of formal power series in $h$ with
coefficients in smooth complex-valued functions on the manifold.
\begin{definition}
  \label{def:1}
  A formal deformation quantization of a Poisson manifold $M$, is an
  associative and $\setC[[h]]$-bilinear product on $C_h^\infty(M)$,
  \begin{align*}
    f_1 * f_2 = \sum_k C_k(f_1, f_2) h^k,
  \end{align*}
  which satisfies
  \begin{align*}
    C_0(f_1, f_2) = f_1 f_2 \qquad \text{and} \qquad C_1(f_1, f_2) -
    C_1(f_2, f_1) = -i \{ f_1, f_2 \},
  \end{align*}
  for any functions $f_1$ and $f_2$ on $M$.
\end{definition}

Very often, extra conditions are imposed on a deformation
quantization.  For instance, the operators $C_k$ are often required to
be bidifferential operators, in which case the star product is said to
be \emph{differential}. Moreover, we say that the star product is
\emph{normalized} if $1 * f = f * 1 = f$, for any function $f$, or
equivalently if $C_k(1, f) = C_k(f,1) = 0$, for $k \geq 1$.

An important source of Poisson manifolds are symplectic manifolds. Any
symplectic manifold $(M, \omega)$, where $\omega \in \Omega^2(M)$ is
non-degenerate and closed, has a canonical Poisson structure defined
by
\begin{align*}
  \{ f_1, f_2 \} = \omega (X_{f_1}, X_{f_2}),
\end{align*}
where $X_f$ denotes the Hamiltonian vector field of a function $f \in
C^\infty(M)$, which is the unique vector field satisfying $df =
\omega(X_f, \cdot)$.

A K\"ahler manifold is a symplectic manifold $(M, \omega)$ equipped
with a compatible complex structure. If $J$ denotes the corresponding
integrable almost complex structure, then compatibility means that
\begin{align*}
  g(X, Y) = \omega(X, JY)
\end{align*}
defines a Riemannian metric on $M$. A deformation quantization $*$ on
a K\"ahler manifold is said to be \emph{with separation of variables}
if $f_1 * f_2 = f_1f_2$, whenever $f_1$ is holomorphic or $f_2$
anti-holomorphic.

We shall be working exclusively with deformation quantizations, with
separation of variables, on K\"ahler manifolds, so for the rest of the
paper, let $M$ be an arbitrary $m$-dimentional K\"ahler manifold with
complex structure $J$, Riemannian metric $g$ and symplectic form
$\omega_{-1}$.

A formal deformation of the K\"ahler form $\omega_{-1}$ is a formal
two-form,
\begin{align*}
  \omega = \frac{1}{h} \omega_{-1} + \omega_0 + \omega_1h + \omega_2
  h^2 + \cdots,
\end{align*}
where each $\omega_k$ is a closed form of type $(1,1)$. Karabegov has
shown that deformation quantizations with separation of variables on
the K\"ahler manifold $M$, are parametrized by such formal
deformations \cite{MR1408526}.

Let us briefly recall how the Karabegov form of a star product $*$ is
calculated. Let $z^1, \ldots, z^m$ be local holomorphic coordinates on
an open subset $U$ of $M$, and suppose that $\Psi^1, \ldots, \Psi^m$
is a set of formal functions on $U$,
\begin{align*}
  \Psi^k = \frac{1}{h} \Psi^k_{-1} + \Psi^k_0 + \Psi^k_1 h + \Psi^k_2
  h^2 + \cdots,
\end{align*}
satisfying
\begin{align*}
  \Psi^k * z^l - z^l * \Psi^k = \delta^{kl}.
\end{align*}
Then the classifying Karabegov form of $*$, which is a global form on
$M$, is given by $\omega \vert_U = - i \bar \d (\sum_k \Psi^k dz^k)$
on the coordinate neighborhood $U$.

For the rest of the paper, $\omega$ will denote a fixed formal
deformation of the K\"ahler form. Also, since we shall be working a
lot in local coordinates, we fix a set of holomorphic coordinates
$z^1, \ldots z^m$ on an open and contractible subset $U$ of $M$.

Choose a formal potential of the form $\omega$ on $U$, that is, choose
a formal function
\begin{align*}
  \Phi = \frac{1}{h} \Phi_{-1} + \Phi_0 + \Phi_1h + \Phi_2 h^2 +
  \cdots,
\end{align*}
such that $\omega \vert_U = i \d \bar \d \Phi$. The existence of a
potential is guaranteed by the fact that $\omega$ is closed and of
type $(1,1)$.

On $U$, the K\"ahler metric is given by the matrix with entries
\begin{align*}
  g_{p\bar q} = g \Big( \dd{}{z^p}, \dd{}{\bar z^q} \Big) = \frac{\d^2
    \Phi_{-1}}{\d z^p \d \bar z^q}.
\end{align*}
Of course this matrix is invertible, and we denote the entries of the
inverse by $g^{\bar q p}$. With this notation, the Poisson bracket is
given by
\begin{align*}
  \{ f_1, f_2 \} = i \sum_{pq} g^{\bar q p} \Big( \dd{f_1}{\bar z^q}
  \dd{f_2}{z^p} - \dd{f_1}{z^p}\dd{f_2}{\bar z^q} \Big).
\end{align*}
Having established the basic notions, let us define the class of
graphs that we shall be working with.

\chapter{Graphs}
\label{cha:graphs}

A directed graph consists of vertices connected by directed edges. If
$G$ is a graph, the set of vertices is denoted by $V_G$ and the set of
edges by $E_G$. The way edges are connected to vertices is encoded by
two maps $h_G, t_G \colon E_G \to V_G$ specifying the \emph{head} and
\emph{tail} of each edge.

An edge is said to be a loop if it has the same head and tail, and two
edges are said to be \emph{parallel} if they connect the same
vertices. A \emph{cycle} is a path that starts and ends at the same
vertex.

We will allow parallel edges in our graphs, but not cycles. In
particular, we do not allow any loops.

A graph without cycles is said to be \emph{acyclic}, and must have at
least one vertex, called a \emph{source}, with only outgoing edges and
at least one \emph{sink} with only incoming edges. We will consider
graphs with a destinguished set of numbered vertices, which we will
call \emph{external}. The rest of the vertices are called
\emph{internal}. The sets of external and internal vertices are
denoted $\Ext(G)$ and $\Int(G)$, respectively. Only an external vertex
is allowed to be a source or a sink, and we require that the first
external vertex is a source and that the last is a sink.

All graphs will be \emph{weighted}, in the sense that each internal
vertex is assigned a weight from the subset $\{-1, 0, 1, 2, \ldots \}$
of integers, and we shall require that vertices of weight -1 have
degree at least three.

The weight of a vertex $v$ is denoted by $w(v)$.  If $G$ is a graph,
we define the total weight of the graph by
\begin{align*}
  W(G) = \abs{E_G} + \sum_{v \in \Int(G)} w(v).
\end{align*}

\begin{figure}[h]
  \centering
  \includegraphics{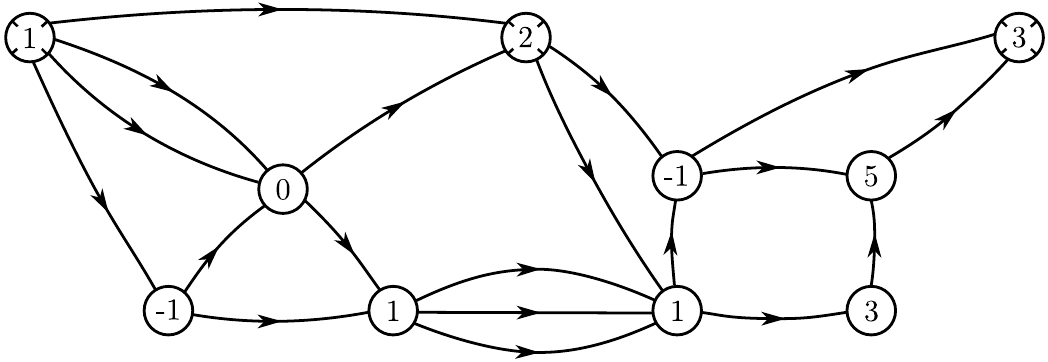}
  \caption{A weighted acyclic graph of total weight 29.}
  \label{fig:1}
\end{figure}

An isomorphism of two graphs is a bijective mapping of vertices to
vertices and edges to edges, preserving the way vertices are connected
by edges, and preserving the external edges and their
numbering. Moreover, an isomorphism should preserve the weights on
internal vertices. If $G$ is a graph, then the set of automorphisms is
denoted by $\Aut(G)$.

The set of isomorphism classes of finite, acyclic and weighted graphs
with $n$ external vertices is denoted by $\mathcal{A}_n$. The subset
of graphs with total weight $k$ is denoted by $\mathcal{A}_n(k)$.

\chapter{Partition Functions}
\label{cha:partition-functions}

In this section, we define the partition function $\Gamma_G(f_1,
\ldots, f_n) \in C^\infty(U)$, for any graph $G \in \mathcal{A}_n$ and
any functions $f_1, \ldots, f_n$ on $U$.

Let us first introduce some notation. If $f \in C^\infty(U)$ is a
function, we define, for each pair of non-negative integers $p$ and
$q$, a covariant tensor $f^{(p,q)}$ on $U$ of type $(p,q)$ by
\begin{align*}
  f^{(p,q)} \Big (\dd{}{z^{i_1}}, \ldots, \dd{}{z^{i_p}}, \dd{}{\bar
    z^{j_1}}, \ldots, \dd{}{\bar z^{j_q}} \Big ) = \frac{\d^{p+q}
    f}{\d z^{i_1} \cdots \d z^{i_p} \d \bar z^{j_1} \cdots \d \bar
    z^{j_q}}.
\end{align*}

Assign to each vertex $v \in V_G$, with $p$ incoming and $q$ outgoing
edges, a tensor by the following rule. If $v$ is the $k$-th external
vertex, we associate the tensor $f_k^{(p,q)}$, and if $v$ is an
internal vertex of weight $w$, we associate the tensor
$-\Phi_w^{(p,q)}$.

Then, we define the partition function $\Gamma_G(f_1, \ldots, f_n)$ to
be the function given by contracting the tensors associated to each
vertex, using the K\"ahler metric, as prescribed by the edges of the
graph. Since the tensors are completely symmetric, this contraction is
well-defined.

Notice that the partition function depends on the deformation $\omega$
of the K\"ahler form, but not on choice of potential $\Phi$. This is
because every internal vertex has at least one incoming and outgoing
edge, and so the potential is differentiated at least once in both a
holomorphic and an anti-holomorphic direction.

Using the partition functions of graphs, we define the following
formal multi-differential operator
\begin{align*}
  D(f_1, \ldots, f_n) = \sum_{G \in \mathcal{A}_n}
  \frac{1}{\abs{\Aut(G)}} \Gamma_G(f_1, \ldots, f_n) h^{W(G)}.
\end{align*}
If we define the multi-differential operators
\begin{align*}
  D_k(f_1, \ldots, f_n) = \sum_{G \in \mathcal{A}_n(k)}
  \frac{1}{\abs{\Aut(G)}} \Gamma_G(f_1, \ldots, f_n),
\end{align*}
then $D$ is given by the formal power series of operators $D = \sum_k
D_k h^k$.

The first result of the paper is stated in the following theorem.
\begin{theorem}
  \label{thm:1}
  The product
  \begin{align*}
    f_1 \star f_2 = D(f_1, f_2) = \sum_k D_k(f_1, f_2) h^k
  \end{align*}
  defines a normalized formal deformation quantization with separation
  of variables on the coordinate neighborhood $U$.
\end{theorem}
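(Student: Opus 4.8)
The plan is to verify the axioms of \Fref{def:1} in turn, isolating associativity as the genuinely hard point. First I would confirm that the formula is well posed. Because every internal vertex has an incoming and an outgoing edge, and every vertex of weight $-1$ has degree at least three, a short counting argument bounds $\abs{E_G}$ (and hence the number of vertices) for any $G \in \mathcal{A}_2(k)$; thus $\mathcal{A}_2(k)$ is finite, each $D_k$ is a bona fide bidifferential operator, and $D = \sum_k D_k h^k$ is a well-defined formal power series. The $\setC[[h]]$-bilinearity of $\star$ is then immediate from the multilinearity of the partition functions in their arguments.

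Next I would read off the terms of lowest total weight. The only graph in $\mathcal{A}_2(0)$ is the pair of isolated external vertices, with partition function $f_1 f_2$, so $C_0(f_1, f_2) = f_1 f_2$. The only graph in $\mathcal{A}_2(1)$ is the single edge from the first vertex (a source) to the second (a sink); its vertices carry the tensors $f_1^{(0,1)}$ and $f_2^{(1,0)}$, and contracting with the metric gives $C_1(f_1, f_2) = \sum_{pq} g^{\bar q p}\, \partial_{\bar z^q} f_1\, \partial_{z^p} f_2$. Antisymmetrizing and comparing with the coordinate expression for the Poisson bracket recorded above yields $C_1(f_1, f_2) - C_1(f_2, f_1) = -i\{f_1, f_2\}$, so the correspondence principle holds.

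For separation of variables the key observation is that the assignment rule decorates a source by a purely anti-holomorphic tensor $f^{(0,q)}$ and a sink by a purely holomorphic tensor $f^{(p,0)}$. If $f_1$ is holomorphic, the tensor at the first vertex vanishes as soon as that vertex has an outgoing edge; being a source it has no incoming edges, so a nonzero contribution forces it to be isolated. Acyclicity then finishes the job: the graph on the remaining vertices is finite and acyclic, hence has a vertex of in-degree zero, but every internal vertex has positive in-degree and the second vertex is a sink, so this is impossible unless the remaining graph is edgeless and therefore has no internal vertices at all. Only the empty graph survives, giving $f_1 \star f_2 = f_1 f_2$; the case of $f_2$ anti-holomorphic is symmetric, using the sink. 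Normalization is then a free consequence, since the constant $1$ is simultaneously holomorphic and anti-holomorphic, whence $1 \star f = f \star 1 = f$.

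The main obstacle is associativity, which is in no way visible from the formula. Rather than attack $(f_1 \star f_2) \star f_3 = f_1 \star (f_2 \star f_3)$ head on, I would follow the strategy announced in the introduction: re-express $D$ through partition functions of graphs carrying extra structure, so that the associativity identity collapses to a purely combinatorial statement about acyclic weighted graphs with three external vertices. Proving that combinatorial identity, carried out in the later sections, is where essentially all the difficulty resides.
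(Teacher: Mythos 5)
Your handling of everything except associativity is correct and essentially identical to the paper's treatment: the identification of the unique graphs of total weight $0$ and $1$, the resulting verification of $C_0$ and of the correspondence principle, and the observation that the source/sink convention forces $f_1$ to receive only anti-holomorphic and $f_2$ only holomorphic derivatives, so that holomorphicity of $f_1$ (or anti-holomorphicity of $f_2$) kills every graph except the edgeless one. Your derivation of normalization as a corollary of separation of variables (where the paper argues it directly from the fact that any graph with edges gives both external vertices positive degree), and your finiteness count for $\mathcal{A}_2(k)$, are harmless variations.

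However, the proposal has a genuine gap: associativity, which is the entire substance of the theorem, is never proved --- it is only announced as ``a purely combinatorial statement'' to be established later, and the paper itself notes that associativity is the only point of \Fref{thm:1} that requires real work. Its proof occupies two sections and needs several ideas that your sketch does not identify. First, the weights $1/\abs{\Aut(G)}$ do not survive graph surgery, so $D$ must be re-expressed as a sum over isomorphism classes of \emph{labelled circuit graphs} with weights $1/C(G)$, where $C(G,l)=\prod_{v} \alpha_l(v)!\,\beta_l(v)!$ counts compatible circuit structures (\Fref{prop:1}). Second, when $\Lambda_{G_1}(f_1,\Lambda_{G_2}(f_2,f_3))$ is expanded by the Leibniz rule, derivatives hitting the metric factors $g^{\bar s r}$ produce, via $\d g^{\bar s r}/\d z^k = -\sum_{pq} g^{\bar s p}\bigl(\d^3\Phi_{-1}/\d z^k \d z^p \d \bar z^q\bigr) g^{\bar q r}$, exactly the internal vertices of weight $-1$ and degree at least three --- this is the reason that vertex class appears in the definition of the graphs at all, and it is what makes every term of the expansion the partition function of a \emph{fusion} of $G_1$ onto $G_2$. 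Third, one needs \Fref{prop:2}: as $G_2$ and the equivalence class $[G_1]$ (circuit structure forgotten at the second external vertex) vary, the fusion sets $\mathcal{F}(G_1,G_2)$ partition $\mathcal{L}^C_3$, together with the counting identity that matches $\abs{[G_1]}=\alpha_l(u)!$ and the $1/C(G)$ factors against the number of ways of extending circuit structures at the attachment vertices. Without these steps --- or some substitute for them --- the identity $D(f_1,D(f_2,f_3)) = D(f_1,f_2,f_3) = D(D(f_1,f_2),f_3)$ of \Fref{thm:2} remains unestablished, and the proposal is not a proof of the theorem.
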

 
Since the only graph with two external vertices and total weight zero
is the graph with no edges and no internal vertices, we clearly have
\begin{align*}
  D_0(f_1, f_2) = f_1 f_2.
\end{align*}
Moreover, there is only one graph of total weight one, namely the
graph with no internal vertices and only one edge connecting the two
external vertices. Therefore

\begin{align*}
  D_1(f_1, f_2) = \sum_{pq} g^{\bar q p} \frac{\d f_1}{\d \bar z^q}
  \frac{\d f_2}{\d z^p},
\end{align*}
and we get that
\begin{align*}
  D_1(f_1, f_2) - D_1(f_2, f_1) = -i \{f_1, f_2\},
\end{align*}
as required of a deformation quantization.

Note, that the expression for the star product is with separation of
variables, since the first external vertex has no incoming edges, and
the second has no outgoing. Also, note that the star product is
normalized, since any graph of total weight higher than zero must have
edges, and therefore the external vertices must have degree at least
one.

The only part of \Fref{thm:1} that remains to be proved is
associativity of the star product. We will prove this by combinatorial
arguments involving certain modifications on graphs.

Since the size of the automorphism group of a graph does not behave
well under these modifications, the expression for the star product
given above is not suitable to work with. Therefore, we need to find a
different expression which behaves better when modifying the graphs.

\chapter{Alternative Expression for the Operator $D$}
\label{cha:altern-expr-d}

Let us be a little more explicit in writing out the partition function
$\Gamma$. To this end, we need to introduce further structure on
graphs.

A \emph{labelling} $l$ of a graph $G \in \mathcal{A}_n$ is an
assignment of indices to the incoming and outgoing edges of each
vertex of the graph. If $v$ is a vertex and $e$ is an incident edge,
then the index specified by the labelling is an integer in the set
$\{1, \ldots, m\}$ and is denoted by $l(v,e)$.

\begin{figure}[h]
  \centering
  \includegraphics{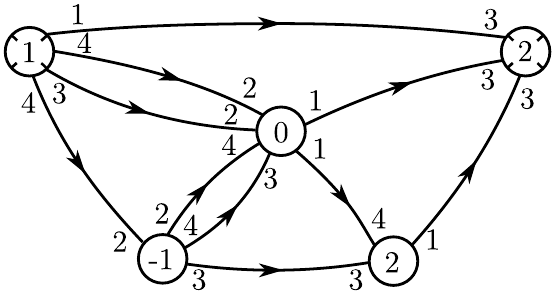}
  \caption{A labelled graph.}
  \label{fig:2}
\end{figure}

An isomorphism of labelled graphs is of course an isomorphism
preserving the labels. The set of labellings of a graph $G$ is denoted
by $\mathcal{L}(G)$, and the set of isomorphism classes of labelled
graphs with $n$ external edges is denoted by $\mathcal{L}_n$.

Let us introduce a partition function $\Lambda^l_G(f_1, \ldots, f_n)$
of a labelled graph $G$ with labelling $l$. For notational
convenience, we first define a function $F_{f_1, \ldots, f_n} \colon
V_G \sqcup E_G \to C^\infty(U)$, which assigns a function to each
vertex and edge of the graph.

Let $v$ be a vertex of $G$, with $p$ incoming and $q$ outgoing edges,
and suppose that the incoming edges are labelled with indices $i_1,
\ldots, i_p$, and the outgoing vertices are labelled with indices
$j_1, \ldots, j_q$. If $v$ is the $k$'th external vertex, then we
define
\begin{align*}
  F_{f_1, \ldots, f_n}(v) = \frac{\d^{p+q} f_k}{\d z^{i_1} \cdots \d
    z^{i_p} \d \bar z^{j_1} \cdots \d \bar z^{j_q}}.
\end{align*}
If $v$ is an internal vertex with weight $w$, then we define
\begin{align*}
  F_{f_1, \ldots, f_n}(v) = - \frac{\d^{p+q} \Phi_w}{\d z^{i_1} \cdots
    \d z^{i_p} \d \bar z^{j_1} \cdots \d \bar z^{j_q}}.
\end{align*}
Notice that this does not depend on the choice of potential, since
internal vertices have at least one incoming and outgoing edge.
Finally, if $e$ is an edge from $u$ to $v$, and we let $s = l(u, e)$
and $r = l(v, e)$, then we define $F_{f_1, \ldots, f_n}(e) = g^{\bar
  sr}$.

Using this, we define
\begin{align*}
  \Lambda^l_G(f_1, \ldots, f_n) = \Big (\prod_{v \in V_G} F_{f_1,
    \ldots, f_n}(v) \Big ) \Big (\prod_{e \in E_G} F_{f_1, \ldots,
    f_n} (e) \Big ).
\end{align*}
From the definition of $\Gamma_G$, it should be obvious that
\begin{align*}
  \Gamma_G(f_1, \ldots, f_n) = \sum_{l \in \mathcal{L}(G)}
  \Lambda^l_G(f_1, \ldots, f_n).
\end{align*}
Therefore, we have the following expression for $D$,
\begin{align*}
  D(f_1, \ldots, f_n) = \sum_{G \in \mathcal{A}_n} \sum_{l \in
    \mathcal{L}(G)} \frac{1}{\abs{\Aut(G)}} \Lambda^l_G(f_1, \ldots,
  f_n) h^{W(G)}.
\end{align*}

The fact that we have written out $D$ in terms on $\Lambda$ will aid
us in later arguments. However, the size of the automorphism group
does not behave well when modifying graphs as we shall later
do. Therefore, we will need to rewrite our expression for $D$ further.

If $G$ is a graph in $\mathcal{A}_n$, a \emph{circuit structure} on
$G$ is a total order, for each vertex of $G$, of the incoming as well
of the outgoing edges of that vertex. This gives rise to a numbering
of the incoming as well as the outgoing edges at each vertex, and if
$v$ is a vertex of $G$ with an incident edge $e$, the circuit
structure therefore specifies a natural number $c(v,e)$.  An
isomorphism of circuit graphs is an isomorphism which preserves the
ordering on the incoming and outgoing edges at each vertex.

\begin{figure}[h]
  \centering
  \includegraphics{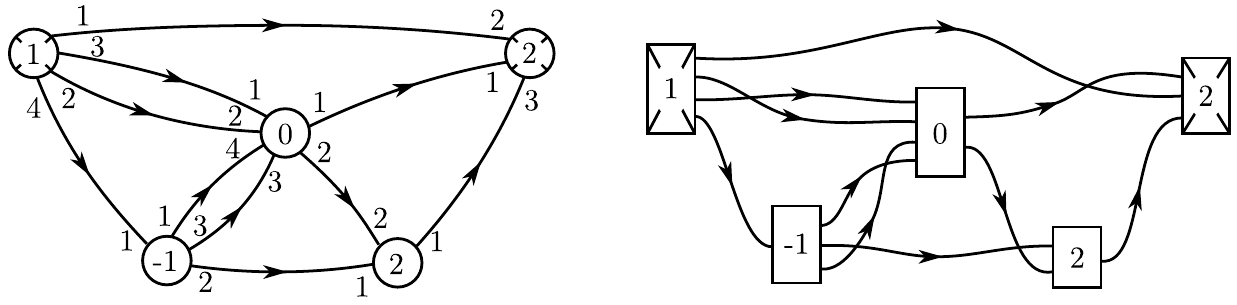}
  \caption{Different representations of a circuit graph.}
  \label{fig:3}
\end{figure}

\Fref{fig:3} shows two ways of representing a circuit structure
graphically. The latter, with rectangular vertices, is usually
preferred. This also motivates the name circuit structure, as it
resembles a diagram of an electrical circuit, where a number of chips,
with input and output pins, are connected by wires. This analogy is
also supported by the fact that our graphs are acyclic.

The set of circuit structures on $G$ is denoted by $\mathcal{C}(G)$,
and the set of isomorphism classes of circuit graphs with $n$ external
vertices is denoted by $\mathcal{C}_n$.

Very often, we shall be working with graphs equipped with both a
labelling and a circuit structure, and we will need to enforce a
certain compatibility between the two structures.

If $G \in \mathcal{A}_n$ is a graph equipped with a labelling $l$ and
a circuit structure $c$, we say that $l$ and $c$ are \emph{compatible}
if for any vertex $v$ and any two edges $e$ and $e'$ incident to $v$,
with the same orientation, we have that $c(v,e) \leq c(v, e')$ implies
$l(v, e) \leq l(v, e')$. In other words, the incoming edges of a
vertex should be labelled ascendingly with respect to the ordering
given by the circuit structure, and likewise for the outgoing edges.

\begin{figure}[h]
  \centering
  \includegraphics{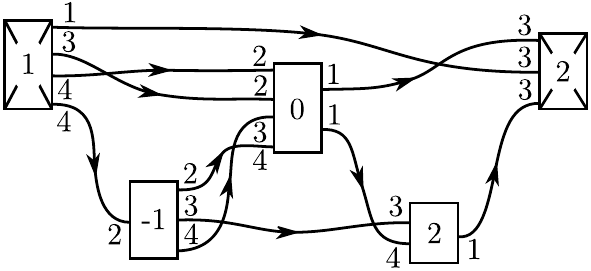}
  \caption{A labelled circuit graph.}
  \label{fig:4}
\end{figure}

If $G$ is a graph with labelling $l$, the set of compatible circuit
structures is denoted by $\mathcal{C}(G, l)$. The set of isomorphism
classes of labelled graphs with a compatible circuit structure is
denoted $\mathcal{L}^C_n$.

Given a labelled graph, the number of compatible circuit structures
will be important to us. To calculate this, we will need some
notation.

Recall that a multi-index is an $m$-tuple $\alpha = (\alpha_1, \ldots,
\alpha_m) \in \setN^m_0$. The length of $\alpha$ is defined to be
$\abs{\alpha} = \alpha_1 + \ldots + \alpha_m$, and we define $\alpha!
= \alpha_1! \cdots \alpha_m!$. A labelling of a graph assigns two
multi-indices to each vertex in a canonical way. More precisely, if
$G$ is a graph with labelling $l$, then we have two canonically
defined maps $\alpha_l, \beta_l \colon V_G \to \setN_0^m$. If $v$ is a
vertex of $G$, then the multi-index $\alpha_l(v)$ counts the number of
occurrences of each label among the incoming edges of $v$. Similarly,
the multi-index $\beta_l(v)$ counts the occurrences of each label
among the outgoing edges.

Now, given the graph $G$ with labelling $l$, the number of compatible
circuit structures is given by
\begin{align*}
  C(G, l) = \prod_{v \in V_G} \alpha_l(v)! \beta_l(v)!.
\end{align*}
Using this, we can rewrite the formula for the operator $D$ as
\begin{align*}
  D(f_1, \ldots, f_n) = \sum_{G \in \mathcal{A}_n} \sum_{l \in
    \mathcal{L}(G)} \sum_{c \in \mathcal{C}(G,l)}
  \frac{1}{\abs{\Aut(G)}C(G, l)} \Lambda_G^l(f_1, \ldots,
  f_n)h^{W(G)},
\end{align*}
since the circuit structure does not influence on the value of the
partition function.

Suppose that $G \in \mathcal{A}_n$ is any graph with $n$ external
edges. If we pick a labelling $l$ and a compatible circuit structure
$c$, then $(G, l, c)$ represents an element of $\mathcal{L}^C_n$. If
we choose a different labelling $l'$ and circuit structure $c'$ on
$G$, then $(G, l', d')$ represents the same isomorphism class in
$\mathcal{L}^C_n$ if and only if there exists an automorphism of $G$,
which sends the labelling $l$ to $l'$ and the circuit structure $c$ to
$c'$. Thus, we have proved the following proposition

\begin{proposition}
  \label{prop:1}
  The operator $D$ is given by
  \begin{align*}
    D(f_1, \ldots, f_n) = \sum_{G \in \mathcal{L}^C_n} \frac{1}{C(G)}
    \Lambda_G(f_1, \ldots, f_n) h^{W(G)},
  \end{align*}
  for any functions $f_1, \ldots, f_n$.
\end{proposition}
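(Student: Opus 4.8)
The plan is to read the summation in the formula preceding the proposition as a sum over $\Aut(G)$-orbits and to collapse each orbit into a single term indexed by $\mathcal{L}^C_n$. Fix a representative $G$ of an isomorphism class in $\mathcal{A}_n$ and let $\Aut(G)$ act by transport of structure on the set $S_G = \{(l,c) : l \in \mathcal{L}(G),\ c \in \mathcal{C}(G,l)\}$ of labellings together with compatible circuit structures. The fact established just before the proposition says precisely that two pairs lie in one orbit if and only if they represent the same class in $\mathcal{L}^C_n$; hence the orbits of this action are in bijection with the classes of $\mathcal{L}^C_n$ whose underlying graph is $G$. Before summing I would record that the three quantities in the summand are constant along orbits: $W(G)$ depends only on $G$; $\Lambda_G^l$ is unchanged when $l$ is moved by an automorphism, since an automorphism merely permutes the vertex- and edge-factors of the product defining $\Lambda$ while preserving their labels; and $C(G,l) = \prod_v \alpha_l(v)!\,\beta_l(v)!$ is unchanged because an automorphism permutes the multi-indices $\alpha_l(v),\beta_l(v)$ among the vertices.

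The heart of the argument is a rigidity statement, which I would isolate as a lemma: a labelled circuit graph has trivial automorphism group, $\abs{\Aut(G,l,c)} = 1$ (in fact the circuit structure alone already forces this). To prove it, let $\sigma$ be an automorphism preserving the circuit structure. Since $G$ is acyclic, choose a topological order on $V_G$ in which tails precede heads, and show by induction along this order that every vertex is fixed. The minimal vertices are the sources, which are external and hence fixed by the numbering of external vertices. For a non-source vertex $v$, pick an incoming edge $e$ with tail $u$; then $u$ precedes $v$ and is fixed by induction, and because $\sigma$ preserves the total order of the outgoing edges at $u$ it must fix each of them individually, in particular $e$, so $v = h_G(e) = h_G(\sigma e) = \sigma(v)$. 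Once all vertices are fixed, the same observation shows every edge is fixed, whence $\sigma = \id$.

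With the lemma in hand the bookkeeping closes at once. For $(l,c) \in S_G$ the stabilizer in $\Aut(G)$ is exactly $\Aut(G,l,c)$, which is trivial, so the orbit-stabilizer theorem gives each orbit size $\abs{\Aut(G)}$. Summing the (constant) summand over a single orbit therefore yields $\abs{\Aut(G)} \cdot \frac{1}{\abs{\Aut(G)}\,C(G,l)}\Lambda_G^l h^{W(G)} = \frac{1}{C(G)}\Lambda_G h^{W(G)}$, exactly the term attached to the corresponding class of $\mathcal{L}^C_n$; summing over all orbits and over all $G \in \mathcal{A}_n$ reproduces the asserted formula. The only genuinely nontrivial point is the rigidity lemma, so I expect the main obstacle to be formulating and proving it carefully: ensuring the induction is grounded (the minimal vertices in the topological order are sources, hence external and fixed) and that order-preservation of the circuit structure at each already-fixed vertex propagates fixedness forward along the acyclic structure.
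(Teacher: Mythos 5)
Your proof is correct and follows essentially the same route as the paper: rewrite $D$ as a sum over pairs consisting of a labelling and a compatible circuit structure, identify the $\Aut(G)$-orbits of such pairs with the isomorphism classes in $\mathcal{L}^C_n$, and collapse each orbit to a single term. Your rigidity lemma (that a circuit graph has trivial automorphism group, proved by induction along a topological order) makes explicit the freeness of the $\Aut(G)$-action --- hence the orbit size $\abs{\Aut(G)}$ --- which the paper uses implicitly when it passes directly from the orbit identification to the stated formula.
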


As we shall often do when the additional structure is clear from the
context, we have omitted the labelling from the notation in this
proposition.

\chapter{Associativity of the Star Product}
\label{cha:associativity}

With the alternative expression for the operator $D$, given in
\Fref{prop:1}, we are ready to prove associativity of the star
product. This is an immediate corollary of the following theorem.

\begin{theorem}
  \label{thm:2}
  We have
  \begin{align*}
    D(f_1, D(f_2, f_3)) = D(f_1, f_2, f_3) = D(D(f_1, f_2), f_3),
  \end{align*}
  for any functions $f_1$, $f_2$ and $f_3$.
\end{theorem}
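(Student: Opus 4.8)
The two equalities in the statement are mirror images of each other under the exchange of holomorphic and anti-holomorphic roles (equivalently, of sources and sinks), so the plan is to prove $D(f_1, D(f_2, f_3)) = D(f_1, f_2, f_3)$ in detail and obtain $D(D(f_1, f_2), f_3) = D(f_1, f_2, f_3)$ by an entirely analogous surgery. Using \Fref{prop:1}, I would write the inner function as $g = D(f_2, f_3) = \sum_K \frac{1}{C(K)} \Lambda_K(f_2, f_3) h^{W(K)}$, a sum over labelled circuit graphs $K$ with two external vertices, and then expand $D(f_1, g) = \sum_G \frac{1}{C(G)} \Lambda_G(f_1, g) h^{W(G)}$. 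Since the second external vertex of $G$ is a sink, it carries a purely holomorphic derivative $\d_{z^{i_1}} \cdots \d_{z^{i_p}} g$ of the inner function, where the indices $i_1, \ldots, i_p$ are the labels of the $p$ edges entering that vertex.

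The heart of the argument is to expand this derivative of $\Lambda_K(f_2, f_3)$ by the Leibniz rule and to read each resulting term off as a graph. There are exactly three kinds of factors in $\Lambda_K$, and differentiating each has a clean graphical meaning: a derivative falling on the function attached to an external or internal vertex raises its order, i.e. attaches one more incoming edge to that vertex; and a derivative falling on an edge factor $g^{\bar s r}$ produces, through the identity $\d_{z^i} g^{\bar s r} = -g^{\bar s a} \Phi_{-1, a \bar b i} g^{\bar b r}$, a subdivision of that edge by a new internal vertex of weight $-1$ carrying the tensor $-\Phi_{-1}^{(2,1)}$, with the differentiating edge entering the new vertex. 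The crucial point is that the sign and tensor produced by this identity match exactly the partition-function rule for a weight $-1$ vertex, that such a vertex automatically has degree at least three (its two structural edges together with the incoming one), and that every newly created cross-edge points from $G$ into the inner graph, so acyclicity is preserved. Assembling $G$ with the modified copy of $K$ — rerouting the incoming edges of the sink onto the factors they differentiated and then deleting the sink — produces a labelled circuit graph $L$ with three external vertices, the former source and sink of $K$ becoming the middle and last external vertices. One checks that $W(L) = W(G) + W(K)$, since each subdivision trades one edge for two edges and a weight $-1$ vertex while the rerouted edges are already counted in $E_G$.

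I would then prove that this assembly establishes the equality term by term, which reduces to a coefficient count: fixing a target graph $L \in \mathcal{L}^C_3$, I must show that the contributions of all pairs $(G, K)$ and all Leibniz distributions that assemble to $L$ add up to precisely $1/C(L)$. This is where the factorial bookkeeping of \Fref{prop:1} is essential, since $C(G)$, $C(K)$ and $C(L)$ are products over vertices of factorials of the incoming- and outgoing-label multiplicities, and the multinomial coefficients generated by the Leibniz rule, together with the number of distributions yielding a fixed $L$, must collapse against $1/\bigl(C(G) C(K)\bigr)$ to give $1/C(L)$. I expect this multinomial reconciliation — together with the verification that, once all decompositions are summed, the assembly is a bijection onto the three-external graphs and correctly resolves the ambiguity of whether a weight $-1$ vertex on the cut was created by a subdivision or was intrinsic to $K$ — to be the main obstacle. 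It is exactly the place where the naive $1/\abs{\Aut(G)}$ weighting of \Fref{thm:1} would fail to compose, and where the reformulation of \Fref{prop:1} was introduced precisely to make the combinatorial factors local to vertices and hence additive under gluing.

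Finally, I would record that the same surgery, carried out with the roles of incoming and outgoing edges (holomorphic and anti-holomorphic derivatives) interchanged at the source vertex, yields $D(D(f_1, f_2), f_3) = D(f_1, f_2, f_3)$, using the conjugate identity $\bar\d_{\bar z^j} g^{\bar s r} = -g^{\bar s a} \Phi_{-1, a \bar b \bar j} g^{\bar b r}$ to create the corresponding weight $-1$ vertices. Combining the two equalities gives $(f_1 \star f_2) \star f_3 = f_1 \star (f_2 \star f_3)$, which is the only part of \Fref{thm:1} left to prove.
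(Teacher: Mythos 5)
Your proposal follows essentially the same route as the paper: it expands both sides via \Fref{prop:1}, interprets the Leibniz expansion of the holomorphic derivatives of the inner partition function graphically (a derivative hitting a vertex tensor attaches a new incoming edge there, a derivative hitting an edge factor $g^{\bar s r}$ subdivides that edge with a weight $-1$ vertex carrying $-\Phi_{-1}^{(2,1)}$), which is exactly the paper's notion of a \emph{fusion} of $G_1$ onto $G_2$, and then reduces the claim to weight additivity plus a coefficient count, with the second equality obtained by the mirrored argument. The multinomial reconciliation you defer as ``the main obstacle'' is precisely what the paper supplies with its equivalence relation on $\mathcal{L}^C_2$ (identifying graphs differing only in the circuit structure at the second external vertex of $G_1$, an equivalence class of size $\alpha_l(u)!$), the observation that an edge labelled $j$ can be inserted in $k+1$ compatible ways at a vertex already having $k$ incoming edges with label $j$, and the statement (\Fref{prop:2}) that the fusion sets $\mathcal{F}(G_1,G_2)$ partition $\mathcal{L}^C_3$.
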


We shall only prove the first equality of this theorem. The second
equality follows by analagous arguments.

To prove \Fref{thm:2}, we must have a better understanding of the
expression $D(f_1, D(f_2, f_3))$. Writing out this expression, we have
\begin{align*}
  D(f_1, D(f_2, f_3)) &= \sum_{G_1 \in \mathcal{L}^C_2} \sum_{G_2 \in
    \mathcal{L}^C_2} \frac{1}{C(G_1) C(G_2)} \Lambda_{G_1}(f_1,
  \Lambda_{G_2}(f_2, f_3)) h^{W(G_1)} h^{W(G_2)},
\end{align*}
and we see that $\Lambda_{G_1}(f_1, \Lambda_{G_2}(f_2, f_3))$ is the
crucial part to understand.

Before we prove \Fref{thm:2}, let us illustrate, with an example, how
graphs in the expression for $D(f_1, f_2, f_3)$ arise from $D(f_1,
D(f_2, f_3))$.

\begin{example}
  \label{ex:1}
  Suppose that we have two graphs $G_1$ and $G_2$ in
  $\mathcal{L}^C_2$, as depicted in \Fref{fig:5}.

  \begin{figure}[h]
    \centering
    \includegraphics{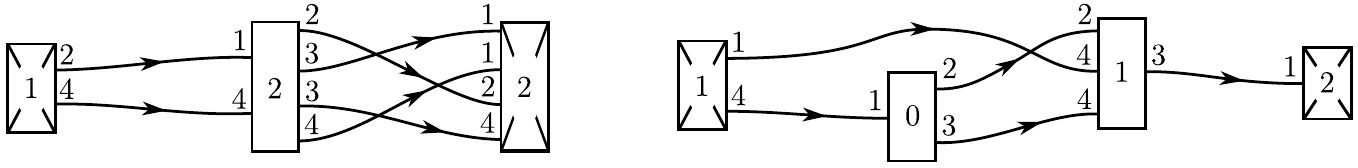}
    \caption{The graphs $G_1$ and $G_2$.}
    \label{fig:5}
  \end{figure}
  
  We think of $G_2$ as representing a term of the inner $D$ in $D(f_1,
  D(f_2, f_3))$, and $G_1$ as representing a term of the outer
  $D$. More precisely, we let
  \begin{align*}
    \hat f = \Lambda_{G_2}(f_2, f_3) = \frac{\d^2 f_1}{\d \bar z^1 \d
      \bar z^4} \frac{\d^3 \Phi_0}{\d z^1 \d \bar z^2 \d \bar z^3}
    \frac{\d^4 \Phi_1}{\d z^2 \d z^4 \d z^4 \d \bar z^3} \frac{\d
      f_2}{\d z^1} g^{\bar 1 4} g^{\bar 4 1} g^{\bar 2 2} g^{\bar 34}
    g^{\bar 3 1},
  \end{align*}
  and we want to calculate the partition function
  \begin{align*}
    \Lambda_{G_1}(f_1, \hat f) = - \frac{\d^2 f_1}{\d \bar z^2 \d \bar
      z^4} \frac{\d^6 \Phi_{2}}{\d z^1 \d z^4 \d \bar z^2 \d \bar z^3
      \d \bar z^3 \d \bar z^4} \frac{\d^4 \hat f}{\d z^1 \d z^1 \d z^2
      \d z^4} g^{\bar 21} g^{\bar 44} g^{\bar 22} g^{\bar 3 1} g^{\bar
      34} g^{\bar 4 1}.
  \end{align*}
  Informally, we have the picture in \Fref{fig:6} in mind as a
  graphical representation of this expression.
  \begin{figure}[h]
    \centering
    \includegraphics{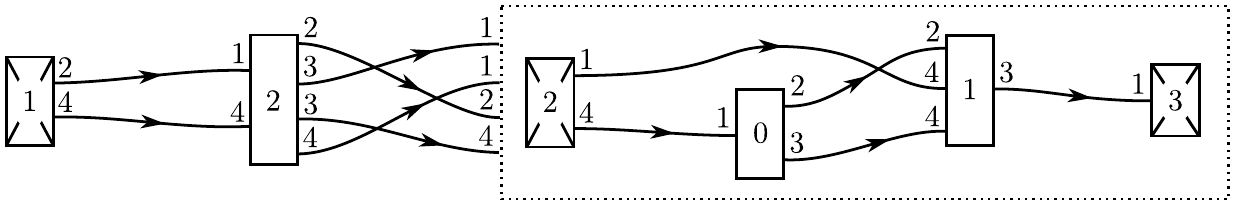}
    \caption{Calculating $\Lambda_{G_1}(f_1, \Lambda_{G_2}(f_2,
      f_3))$.}
    \label{fig:6}
  \end{figure}
  
  Since $\hat f$ is given by a product, the Leibniz rule says that
  $\frac{\d^4 \hat f}{\d z^1 \d z^1 \d z^2 \d z^4}$ is given by a sum,
  where each term represents a certain way of distributing the partial
  derivatives among the factors.
  
  Let us focus on one such term, say the one where the first and the
  third partial derivative from the left hit the factor $\frac{\d^3
    \Phi_0}{\d z^1 \d \bar z^2 \d \bar z^3}$, the second derivative
  hits the factor $\frac{\d^2 f_1}{\d \bar z^1 \d \bar z^4}$, and the
  fourth hits the factor $g^{\bar 34}$. That term is then given by
  \begin{align*}
    \frac{\d^3 f_1}{\d z^1\d \bar z^1 \d \bar z^4} \frac{\d^5
      \Phi_0}{\d z^1 \d z^1 \d z^2\d \bar z^2 \d \bar z^3} \frac{\d^4
      \Phi_1}{\d z^2 \d z^4 \d z^4 \d \bar z^3} \frac{\d f_2}{\d z^1}
    g^{\bar 1 4} g^{\bar 4 1} g^{\bar 2 2} \frac{\d g^{\bar 34}}{\d
      z^4} g^{\bar 3 1}.
  \end{align*}
  But partial derivatives of the inverse metric can be easily
  expressed in terms of partial derivatives of the metric, as in
  \begin{align*}
    \frac{\d g^{\bar 3 4}}{\d z^4} = - \sum_{pq} g^{\bar 3 p} \frac{\d
      g_{p \bar q}}{\d z^4} g^{\bar q 4} = - \sum_{pq} g^{\bar 3 p}
    \frac{\d^3 \Phi_{-1}}{\d z^4 \d z^p \d \bar z^q} g^{\bar q 4} .
  \end{align*}
  If we choose particular values, say $p = 1$ and $q = 2$, for the
  summation variables, then we arrive at
  \begin{align*}
    - \frac{\d^3 f_1}{\d z^1\d \bar z^1 \d \bar z^4} \frac{\d^5
      \Phi_0}{\d z^1 \d z^1 \d z^2\d \bar z^2 \d \bar z^3} \frac{\d^4
      \Phi_1}{\d z^2 \d z^4 \d z^4 \d \bar z^3} \frac{\d^3
      \Phi_{-1}}{\d z^1 \d z^4 \d \bar z^2} \frac{\d f_2}{\d z^1}
    g^{\bar 1 4} g^{\bar 4 1} g^{\bar 2 2} g^{\bar 3 1} g^{\bar 2 4}
    g^{\bar 3 1}
  \end{align*}
  as an example of what terms in the expression for $\frac{\d^4 \hat
    f}{\d z^1 \d z^1 \d z^2 \d z^4}$ look like.
  
  If we insert this into the expression for $\Lambda_{G_1}(f_1, \hat
  f)$ above, we get an example of what terms in the expression for
  $D(f_1, D(f_2, f_3))$ look like. But this particular example can be
  represented graphically by $\Lambda_G{(f_1, f_2, f_3)}$, where $G
  \in \mathcal{L}^C_3$ is the graph shown in \Fref{fig:7}.
  \begin{figure}[h]
    \centering
    \includegraphics{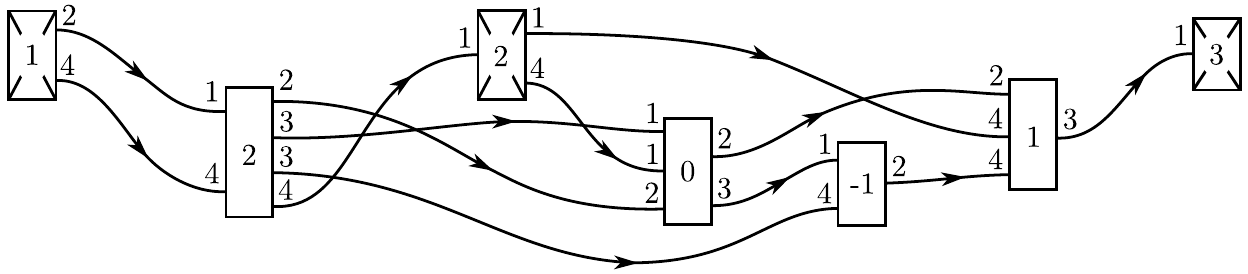}
    \caption{A fusion G of the two graphs $G_1$ and $G_2$.}
    \label{fig:7}
  \end{figure}

\end{example}

With this concrete example in mind, let us turn to more general
considerations. The graph in \Fref{fig:7} is an example of a
\emph{fusion} of the graphs $G_1$ and $G_2$. Let us define this notion
more carefully.

Let $G_1$ and $G_2$ be two graphs in $\mathcal{L}^C_2$. A fusion of
$G_1$ onto $G_2$ is a graph $G \in \mathcal{L}^C_3$ with three
external vertices, obtained through the following procedure.  Cut out
the second external vertex of $G_1$, leaving a collection of labelled
edges with loose ends. Connect each of these loose ends, one at a
time, to the graph $G_2$ in one of two possible ways. The first is to
connect a loose end to one of the vertices of $G_2$, and extend the
circuit structure at the vertex, in any way compatible with the
labelling, to include the newly attached edge. The second possibility
os to attach a loose end to one of the edges of $G_2$. This is done by
adding a vertex of weight -1 on the edge, choosing any labelling of
the two edges incident to the new vertex, attaching the loose end to
the new vertex and choosing a circuit structure at the
vertex. Finally, the first and second external vertices of $G_2$ will
be the second and third external vertex of the fusion, respectively.

Clearly, a fusion of two graphs results in a labelled circuit graph
with 3 external vertices. The set of isomorphism classes of such
graphs, that can be obtained from two graphs $G_1$ and $G_2$ through a
fusion procedure, is denoted $\mathcal{F}(G_1, G_2)$.

Given a labelled circuit graph $G \in \mathcal{L}^C_3$, with three
external vertices, it is natural to ask if this can be obtained as a
fusion of two graphs $G_1$ and $G_2$ in $\mathcal{L}^C_2$.  Moreover,
it is natural to ask how much information about the graphs $G_1$ and
$G_2$ is encoded in a fusion.

Given two vertices $u$ and $v$ of a graph, we say that $v$ is a
\emph{successor} of $u$ if there exists a directed path from $u$ to
$v$. A crucial observation is that when $G_1$ is fused to $G_2$, any
vertex in $G_2$ which is a successor of the first external vertex in
$G_2$ will be a successor of the second external vertex in the
fusion. Moreover, vertices that arose by attaching a loose end to an
edge of $G_2$ will also be successors of the second external
vertex. On the other hand, none of the vertices of $G_1$ will succeed
the second external vertex in the fusion.

These observations can be used to reconstruct nearly all the
information about the structure of the graphs $G_1$ and $G_2$ from a
fusion of these. Moreover, as we shall see, any labelled circuit graph
with three external vertices arises as a fusion.

Suppose that $G \in \mathcal{L}^C_3$ is any labelled circuit graph. We
seek two labelled circuit graphs $G_1$ and $G_2$ such that $G \in
\mathcal{F}(G_1, G_2)$. We can completely determine the isomorphism
class of $G_2$ in $\mathcal{L}^C_2$ by the following procedure. Delete
all vertices from $G$ which are not successors of the second external
vertex, as well as all edges incident to at least one such vertex. The
result may contain vertices of weight -1 and degree 2. These are the
remnants of vertices arising during the fusion when a loose edge end
is connected to an edge of $G_2$. Every such vertex is deleted and the
resulting two loose ends are spliced, forgetting their
labelling. Finally, the second and third external vertices are the
only external vertices left, and they will be the first and second
external vertices in $G_2$, respectively.

In a similar way, we can almost determine the isomorphism class of the
labelled circuit graph $G_1$ by deleting all successors of the second
external vertex in $G$, and all edges between two such successors, and
then connect all the remaining loose edge ends to a new vertex, which
will be the second external vertex of $G_1$. There is however no
canonical way of telling what the circuit structure at the second
external vertex should be.

To deal with this ambiguity, we define an equivalence relation on the
set $\mathcal{L}^C_2$ of labelled circuit graphs with two external
vertices. Consider two graphs $G$ and $G'$ in $\mathcal{L}^C_2$, with
labellings $l$ and $l'$ and circuit structures $c$ and $c'$. We say
that these graphs are equivalent, and we write $G \sim G'$, if there
exists an isomorphism between $G$ and $G'$ which preserves the
labelling at all vertices, and which preserves the circuit structure,
except possibly at the second external vertex. In the discussion
above, the equivalence class of the graph $G_1$ is then completely
determined.

We summarize our findings in the following proposition
\begin{proposition}
  \label{prop:2}
  For any labelled circuit graph $G \in \mathcal{L}^C_3$, there exist
  two labelled circuit graphs $G_1, G_2 \in \mathcal{L}^C_2$ such that
  $G \in \mathcal{F}(G_1, G_2)$. Moreover, the equivalence class of
  $G_1$ is uniquely determined by $G$, and so is the isomorphism class
  of $G_2$.
\end{proposition}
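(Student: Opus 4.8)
The plan is to make the informal reconstruction described just above the proposition completely precise and to verify that it is a two-sided inverse to the fusion procedure. The whole argument rests on a single structural fact: in any fusion, the partition of the vertices into those coming from $G_1$ and those coming from $G_2$ (together with the vertices inserted on edges of $G_2$) is \emph{intrinsic} to $G$, i.e.\ recoverable from $G$ alone. First I would isolate this fact, and then read off existence and uniqueness from it.

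The crucial observation concerns the orientation of the edges joining the two pieces. The second external vertex of $G_1$ is a sink, so it meets only incoming edges; cutting it out therefore leaves loose ends whose tails lie in $G_1$, and attaching such an end to $G_2$ — whether to a vertex or to a freshly inserted vertex — produces an edge directed \emph{from} $G_1$ \emph{into} $G_2$. Consequently no directed path can leave $G_2$, and the successors of the second external vertex of the fusion are exactly the vertices of $G_2$ other than its first external vertex, together with the inserted vertices. One inclusion is the observation already recorded before the proposition; for the converse, a vertex of $G_1$ could only be reached from $G_2$ across a connecting edge, which the orientation above forbids, while within $G_2$ every vertex is reachable from the first external vertex of $G_2$, since tracing incoming edges backwards in an acyclic graph whose only source is that vertex must terminate there (subdivision by inserted vertices preserves reachability). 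The second recognition fact is that the inserted vertices are detectable: after the $G_1$-part and its incident edges are removed, an inserted vertex has weight $-1$ and degree exactly two, whereas the convention that weight $-1$ vertices have degree at least three — together with the fact that deleting cross edges only lowers a $G_2$-vertex back to its degree inside $G_2$, leaving its genuine $G_2$-edges intact — guarantees that no original vertex of $G_2$ can masquerade as one.

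Granting these two facts, existence follows by running the reconstruction on an arbitrary $G \in \mathcal{L}^C_3$. Define $G_2$ by restricting $G$ to the second external vertex together with its successors, deleting the weight $-1$ degree-two vertices and splicing their two edges; this restores the source and sink roles of the second and third external vertices of $G$, preserves acyclicity, and keeps the labelling at the surviving endpoints as well as the circuit structure. Define $G_1$ by deleting those same successors, keeping the edges between non-successors, and gathering the remaining loose ends at a new sink with any compatible circuit structure. One then checks that both graphs lie in $\mathcal{L}^C_2$ and that fusing them, making at each reattachment the label and circuit-structure choices dictated by $G$, returns $G$; here I would use that the label carried by a loose end at the sink of $G_1$ coincides with the label it acquires at its attachment point in $G$, exactly as in the passage from $g^{\bar 3 4}$ to the inserted $\Phi_{-1}$-vertex in \Fref{ex:1}.

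For uniqueness I would show that the very same reconstruction, applied to \emph{any} presentation $G \in \mathcal{F}(G_1, G_2)$, returns the isomorphism class of $G_2$ and a representative of the $\sim$-class of $G_1$; since the reconstruction depends on $G$ only, this forces $G_2$ up to isomorphism and $G_1$ up to $\sim$. The isomorphism class of $G_2$ is pinned down completely, because the splice recovers each original edge with its labelling at both genuine endpoints, the inner labels at the inserted vertices being the only data discarded. For $G_1$ the labelling and all of the circuit structure are likewise recovered, with the single exception of the order in which the incoming edges meet the sink of $G_1$: this plays no role whatsoever in the fusion and is irretrievably lost, which is precisely the freedom built into the relation $\sim$. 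I expect this last bookkeeping — verifying that every piece of combinatorial data is either faithfully recorded in $G$ or else absorbed into the equivalence $\sim$, with nothing over- or under-counted — to be the main obstacle, since it is here that the definitions of the labelling, the circuit structure, and the equivalence relation must be made to mesh exactly.
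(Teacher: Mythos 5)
Your proposal is correct and follows essentially the same route as the paper: identify the successors of the second external vertex as the intrinsic $G_2$-part (using that all cross edges are oriented from the $G_1$-side into the $G_2$-side), detect the inserted weight $-1$ vertices by their degree $2$ after restriction and splice them out to recover $G_2$, and recover $G_1$ by deleting the successors and gathering the loose ends at a new sink, with the lost circuit structure there exactly matching the equivalence $\sim$. Your write-up is in fact somewhat more careful than the paper's own discussion (e.g.\ the explicit orientation argument and the degree-based detection of inserted vertices), so the remaining bookkeeping you flag is no more of a gap than what the paper itself leaves implicit.
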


When calculating $D(f_1, D(f_2, f_3)$, we are basically faced with the
task of calculating $\Lambda_{G_1}(f_1, \Lambda_{G_2}(f_2, f_3))$ for
any two labelled circuit graphs $G_1$ and $G_2$. As illustrated in
\Fref{ex:1}, this is given by a sum, where each term can be
represented by a fusion of $G_1$ and $G_2$.

Now suppose that an edge, incident to the second external vertex in
$G_1$ and with label $j$, is attached to a vertex $v$ in $G_2$, and
that $v$ already has $k$ incoming edges with label $j$. Then, when
extending the circuit structure at $v$ to include the newly attached
edge, there are $k+1$ ways of placing the new edge in the ordering of
the incoming edges.

Moreover, suppose that $l$ is the labelling of $G_1$, and let $u$ be
the second external vertex. Then, the size of the equivalence class
$[G_1]$ is given by $\alpha_{l}(u)!$.

These observations suffice to realize that
\begin{align*}
  \sum_{G \in [G_1]} \frac{1}{C(G) C(G_2)} \Lambda_{G}(f_1,
  \Lambda_{G_2}(f_2, f_3)) = \sum_{G \in \mathcal{F}(G_1, G_2)}
  \frac{1}{C(G)}\Lambda_G(f_1, f_2, f_3).
\end{align*}
Since $W(G_1)+W(G_2) = W(G)$ if $G \in \mathcal{F}(G_1, G_2)$, we can
multiply the left-hand side by $h^{W(G_1)}h^{W(G_2)}$ and the
right-hand side by $h^{W(G)}$, and sum over all graphs $G_2 \in
\mathcal{L}^C_2$ and all equivalence classes $[G_1]$ in
$\mathcal{L}^C_2 / \smash{\sim}$ to get

\begin{align*}
  D(f_1, D(f_2, f_3)) &= \sum_{G_1 \in \mathcal{L}^C_2} \sum_{G_2 \in
    \mathcal{L}^C_2} \frac{1}{C(G_1) C(G_2)} \Lambda_{G_1}(f_1,
  \Lambda_{G_2}(f_2, f_3)) h^{W(G_1)} h^{W(G_2)} \\ &= \sum_{[G_1] \in
    \mathcal{L}^C_2 / \smash{\sim}} \sum_{G_2 \in \mathcal{L}^C_2}
  \sum_{G \in \mathcal{F}(G_1, G_2)} \frac{1}{C(G)}\Lambda_G(f_1, f_2,
  f_3) h^{W(G)}.
\end{align*}

But as $[G_1]$ runs through all equivalence classes of
$\mathcal{L}^C_2 / \smash{\sim}$, and $G_2$ runs through
$\mathcal{L}^C_2$, then \Fref{prop:2} tells us that the sets
$\mathcal{F}(G_1, G_2)$ partition the set $\mathcal{L}^C_3$, that is,
they form a collection of disjoint sets whose union is all of
$\mathcal{L}^C_3$. Thus, we conclude that
\begin{align*}
  D(f_1, D(f_2, f_3)) = \sum_{G \in \mathcal{L}^C_3}
  \frac{1}{C(G)}\Lambda_G(f_1, f_2, f_3) = D(f_1, f_2, f_3).
\end{align*}

This proves the first equality of \Fref{thm:2}. The other equality is
proved by similar methods, and therefore the theorem is proved. This
also proves \Fref{thm:1}, which is an immediate corollary of
\Fref{thm:2}.

\chapter{Coordinate Invariance and Classification}
\label{cha:coord-invar-class}

In this section, we prove that the local star product of \Fref{thm:1}
is independent of the coordinates used in its definition. This implies
that it defines a global star product on $M$, and as we shall see, the
Karabegov form of this global star product is given by $\omega$.

The claims above will follow easily from the following theorem.
\begin{theorem}
  \label{thm:3}
  The local star product $\star$ on $U$ has Karabegov form $\omega
  \vert_U$.
\end{theorem}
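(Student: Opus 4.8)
The plan is to apply Karabegov's recipe recalled in Chapter~2: it suffices to produce formal functions $\Psi^1,\dots,\Psi^m$ on $U$ satisfying $\Psi^k\star z^l-z^l\star\Psi^k=\delta^{kl}$ and then to evaluate $-i\bar\d\big(\sum_k\Psi^k\,dz^k\big)$. I claim the correct choice is $\Psi^k=\dd{\Phi}{z^k}$, so that $\Psi^k_w=\dd{\Phi_w}{z^k}$ for every $w\geq -1$; note that $\Psi^k$ begins at order $h^{-1}$. Since $z^l$ is holomorphic, separation of variables gives $z^l\star\Psi^k=z^l\Psi^k$, so the claim reduces to showing that the positive–total–weight part of $D(\Psi^k,z^l)$ equals the constant $\delta^{kl}$.

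The decisive simplification is that $z^l$ sits at the sink: being a sink it carries only holomorphic derivatives, and since $z^l$ is linear every contributing graph must have the second external vertex of degree exactly one, with its single incoming edge labelled $l$. The unique graph of total weight $1$ is then the single edge joining the two external vertices, and feeding in the leading part $\Psi^k_{-1}=\dd{\Phi_{-1}}{z^k}$ produces, at order $h^0$, the term $\sum_q g^{\bar q l}\frac{\d^2\Phi_{-1}}{\d z^k\d\bar z^q}=\sum_q g^{\bar q l}g_{k\bar q}=\delta^{kl}$.

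Everything then hinges on showing that all remaining graphs cancel. The mechanism, already visible in \Fref{ex:1}, is the identity $\dd{g^{\bar s r}}{z^a}=-\sum_{pq}g^{\bar s p}\,\dd{g_{p\bar q}}{z^a}\,g^{\bar q r}$ and its conjugate, whose graphical avatar is the insertion of a trivalent internal vertex of weight $-1$ on an edge. Because the source tensor $\Psi^k=\dd{\Phi}{z^k}$ is itself a holomorphic derivative of the potential, it differentiates exactly like an internal $\Phi$-vertex, up to the sign carried by internal vertices. I would therefore exhibit a sign-reversing involution on the set of weight-$\geq 1$ graphs contributing to $\Psi^k\star z^l$: at a canonically chosen spot along the directed path from the source to the sink, the involution either absorbs a trivalent weight-$(-1)$ vertex into its neighbour or splits one off, trading a derivative acting on a vertex for one acting on an adjacent edge. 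This move flips the parity of the number of internal vertices, hence the overall sign, while preserving the total weight and the value of the partition function, so paired graphs cancel; the single-edge graph is the unique fixed point, leaving exactly $\delta^{kl}$. \emph{This involution is the main obstacle:} matching the automorphism and circuit-structure factors across the move, and checking that it is well defined, fixed-point free away from the single edge, and genuinely sign-reversing, is the heart of the argument. It is cleanest to carry out in the circuit-graph language of \Fref{prop:1}, where the combinatorial weights behave well under such local modifications.

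With $\Psi^k=\dd{\Phi}{z^k}$ established, the Karabegov form is immediate: $\sum_k\Psi^k\,dz^k=\d\Phi$, so $-i\bar\d\big(\sum_k\Psi^k\,dz^k\big)=-i\bar\d\d\Phi=i\d\bar\d\Phi=\omega\vert_U$, using $\bar\d\d=-\d\bar\d$. Since the expression $-i\bar\d(\sum_k\Psi^k\,dz^k)$ is the globally defined Karabegov form and is independent of the auxiliary choices, this computation, valid on every chart, shows that the local products agree on overlaps and patch to a global star product on $M$ whose Karabegov form is $\omega$, as claimed.
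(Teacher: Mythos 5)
Your proof follows the same skeleton as the paper's: the same choice $\Psi^k=\dd{\Phi}{z^k}$, the same reduction (via separation of variables and the $h^0$ computation) to showing that all higher-order contributions to $\Psi^k\star z^l$ cancel, and the same closing computation $-i\bar\d\d\Phi=i\d\bar\d\Phi=\omega\vert_U$. But the cancellation itself --- which you correctly flag as the heart of the matter --- is both left unproven and, as described, based on a move that cannot work. Your involution ``preserv[es] the total weight'' and acts by inserting or absorbing a trivalent weight-$(-1)$ vertex; any such weight-preserving pairing is doomed, because the required cancellation necessarily crosses graph weights. Concretely, at order $h^1$ the single-edge graph fed $\Psi^k_0$ contributes $\sum_q g^{\bar ql}\,\d^2\Phi_0/\d z^k\d\bar z^q$, and $\mathcal{A}_2(1)$ contains no other graph, so nothing of weight one can cancel it; it is in fact cancelled by the weight-two graph (source $\to$ weight-$0$ internal vertex $\to$ sink) fed the \emph{different} coefficient $\Psi^k_{-1}$. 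Moreover, the mechanism you invoke --- ``trading a derivative acting on a vertex for one acting on an adjacent edge,'' via $\dd{g^{\bar sr}}{z^a}=-\sum_{pq}g^{\bar sp}\,\dd{g_{p\bar q}}{z^a}\,g^{\bar qr}$ --- has no footing here: in the evaluation of $\Gamma_G(\Psi^k_w,z^l)$ every edge contributes an \emph{undifferentiated} factor $g^{\bar qp}$, so no derivative ever hits an edge. That identity is the engine of the fusion argument for associativity (where an outer graph differentiates an inner partition function), not of this theorem; and absorbing a trivalent weight-$(-1)$ vertex into a neighbouring vertex genuinely changes the value of the partition function, so paired terms would not cancel.

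The move that does work is the one your own key observation (that the source tensor is a derivative of the potential) points to, and it is the paper's \emph{budding}: given $G\in\mathcal{A}_2(k-w)$ with $w\geq 0$, convert the first external vertex into an internal vertex of weight $w$ and attach a new degree-one source; the same operation with $w=-1$ maps graphs whose source has degree $>1$ to graphs whose source has degree $1$. Because internal vertices carry a minus sign, one gets $\Gamma_{B(G)}(\Psi^k_{-1},z^l)=-\Gamma_G(\Psi^k_w,z^l)$: the pairing exchanges graph weight for the index of the coefficient $\Psi^k_w$ fed into the source, which is exactly what your weight-preserving move cannot do. Since $B$ is a bijection onto the set of graphs with source of degree one and preserves $\abs{\Aut(G)}$, each identity $\sum_{w=-1}^{k-1}D_{k-w}(\Psi^k_w,z^l)=0$, $k\geq 1$, follows at once; no circuit structures are needed (the paper's Proposition~\ref{prop:1} machinery is reserved for associativity). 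Globally this matches your picture --- the single edge fed $\Psi^k_{-1}$ is the unique unpaired term and yields $\delta^{kl}$ --- but the local move realizing it is budding, not weight-$(-1)$ insertion.
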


\begin{proof}
  We shall prove that the formal functions $\Psi^r = \d \Phi / \d z^r$
  satisfy the relation
  \begin{align}
    \label{eq:7}
    \Psi^r \star z^s - z^s \star \Psi^r = \delta^{rs}.
  \end{align}
  This will prove the theorem, since $\omega \vert_U = i\d \bar \d
  \Phi = - i \bar \d (\sum_k \Psi^k dz^k)$.
  
  Clearly, we have $D_0(\Psi^r_{-1}, z^s) - D_0(z^s, \Psi^r_{-1}) = 0$
  and
  \begin{align*}
    D_1(\Psi^r_{-1}, z^s) - D_1(z^s, \Psi^r_{-1}) = - i\{\Psi^r_{-1},
    z^s \} = \delta^{rs},
  \end{align*}
  so the identity \eqref{eq:7} is equivalent to the system of
  identities
  \begin{align*}
    \sum_{l=-1}^{k-1} D_{k-l}(\Psi^r_l, z^s) = 0, \qquad k \geq 1.
  \end{align*}
  To prove this, we define a modification on graphs called a
  \emph{budding}. If $l > -1$ and $G \in \mathcal{A}_2(k-l)$ is a
  graph, we define the budded graph $B(G) \in \mathcal{A}_2(k+1)$ by
  the following procedure. Let $u$ denote first external vertex of $G$
  and convert this into an internal vertex of weight $l$. Then add a
  new first external vertex and connect this to $u$ by a single edge.

  \begin{figure}[h]
    \centering
    \includegraphics{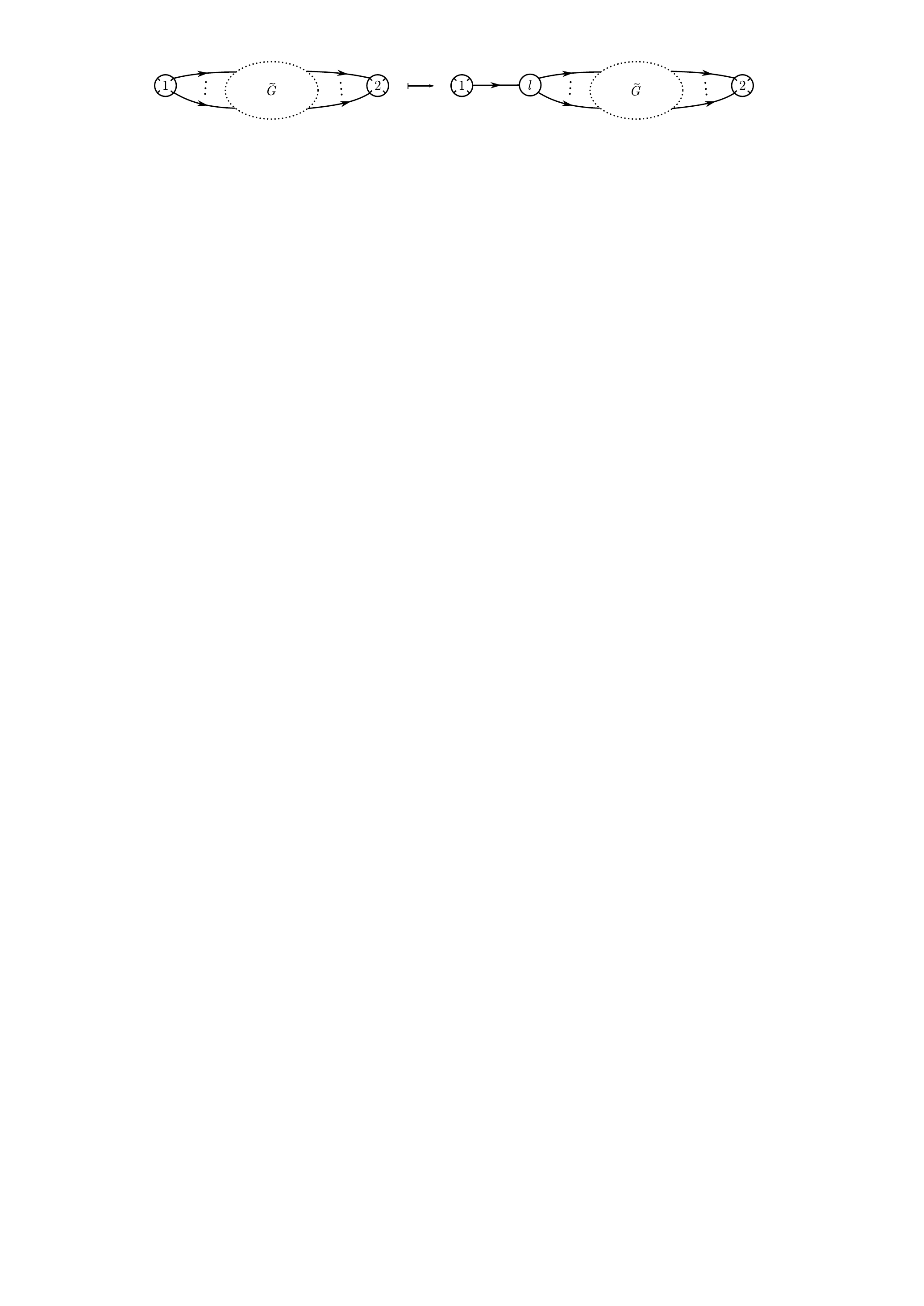}
    \caption{A budding of a graph.}
    \label{fig:8}
  \end{figure}
  
  We had to exclude the case $l=-1$, since the first external vertex
  of $G$ might have degree one, in which case the budded graph would
  not satisfy the rule that internal vertices of weight -1 must have
  degree at least three. However, if we let $\mathcal{A}^1_2(k+1)$ be
  the set of graphs with degree one on the first external vertex, and
  $\mathcal{A}^{>1}_2(k+1)$ be the set of graphs with degree more than
  one on the first external vertex, then the budding construction
  defines a map $B \colon \mathcal{A}^{>1}_2(k+1) \to
  \mathcal{A}^1_2(k+1)$.

  We conclude that the budding construction gives a map
  \begin{align*}
    B \colon \mathcal{A}_2^{>1}(k+1) \cup \bigcup_{l = 0}^{k-1}
    \mathcal{A}_2(k-l) \to \mathcal{A}^1_2(k+1).
  \end{align*}
  Clearly, this map is a bijection, as the inverse map is easily
  constructed. Moreover, it is clear that the budding map preserves
  the size of the automorphism group.

  Now, the crucial property of the budding map is that
  \begin{align*}
    \Gamma_{B(G)}(\Psi^r_{-1}, z^s) = - \Gamma_{G}(\Psi^r_{l}, z^s),
  \end{align*}
  for any graph $G$ in the domain of $B$. Since $B$ is a bijection,
  which preserves the size of the automorphism group, this implies
  that
  \begin{align*}
    &\sum_{l=-1}^{k-1} D_{k-l}(\Psi^r_l, z^s) = \sum_{l=-1}^{k-1}
    \sum_{G \in \mathcal{A}(k-l)} \frac{1}{\abs{\Aut(G)}}
    \Gamma_G(\Psi^r_l, z^s) \\ & \quad = \sum_{G \in
      \mathcal{A}^1_2(k+1)} \frac{\Gamma_G(\Psi^r_{-1},
      z^s)}{\abs{\Aut(G)}} + \sum_{G \in \mathcal{A}^{>1}_2(k+1)}
    \frac{\Gamma_G(\Psi^r_{-1}, z^s)}{\abs{\Aut(G)}} +
    \sum_{l=0}^{k-1} \sum_{G \in \mathcal{A}(k-l)}
    \frac{\Gamma_G(\Psi^r_l, z^s)}{\abs{\Aut(G)}} \\ & \quad =
    \phantom{\sum^1} 0.
  \end{align*}
  This proves the theorem.
\end{proof}

Karabegov's classification has the obvious property that restriction of
a star product to an open subset corresponds to restriction of the
Karabegov form. Therefore, it follows immediately that $\star$ is the
restriction of the unique star product on $M$ with Karabegov form
$\omega$. In particular, the explicit expression given in \Fref{thm:4}
must be independent of the local coordinates used. This finishes the
proof of the main result given in \Fref{thm:4}, which summarizes all
of our findings.

We remark that \Fref{thm:4}
gives an explicit formula, to all orders, of the Berezin star product
with trivial Karabegov form $\frac{1}{h}\omega_{-1}$.

Moreover, in \cite{MR1868597} it was shown that the Berezin-Toeplitz
star product, which is defined on compact K\"ahler manifolds through
asymptotic expansions of products of Toeplitz operators
\cite{MR1805922}, is a differential star product whose opposite star
product is with separation of variables and has
Karabegov form given by $-\frac{1}{h} \omega_{-1} + \rho$, where
$\rho$ denotes the Ricci form on the K\"ahler manifold. Using
\Fref{thm:4}, we can therefore give an explicit formala for the
Berezin-Toeplitz star product to all orders.

The main theorem implies that the operator $D$ is coordinate independent
when applied to two functions, and hence also when applied to three by
\Fref{thm:2}. In fact we conjecture that the general formula for $D$
is coordinate invariant and that there are relations analagous to
\Fref{thm:2}, when applied to a larger collection of functions.

As a closing remark, we think it would be very interesting to use the
formula presented in this paper to try to find invariant expressions
for the star products in terms of covariant derivatives and global
forms.

\subsection*{Acknowledgements}

I would like to thank J\o rgen Ellegaard Andersen and Nicolai
Reshetikhin for bringing the paper \cite{MR1772294} to my attention
and for many enlightening discussions.

\clearpage

\renewcommand{\bibname}{References}


\begin{thebibliography}{10}

\bibitem{MR0496157}
F.~Bayen, M.~Flato, C.~Fronsdal, A.~Lichnerowicz, and D.~Sternheimer.
\newblock Deformation theory and quantization. {I}. {D}eformations of
  symplectic structures.
\newblock {\em Ann. Physics}, 111(1):61--110, 1978.

\bibitem{MR0496158}
F.~Bayen, M.~Flato, C.~Fronsdal, A.~Lichnerowicz, and D.~Sternheimer.
\newblock Deformation theory and quantization. {II}. {P}hysical applications.
\newblock {\em Ann. Physics}, 111(1):111--151, 1978.

\bibitem{MR0395610}
F.~A. Berezin.
\newblock Quantization.
\newblock {\em Izv. Akad. Nauk SSSR Ser. Mat.}, 38:1116--1175, 1974.

\bibitem{MR728644}
Marc De~Wilde and Pierre B.~A. Lecomte.
\newblock Existence of star-products and of formal deformations of the
  {P}oisson {L}ie algebra of arbitrary symplectic manifolds.
\newblock {\em Lett. Math. Phys.}, 7(6):487--496, 1983.

\bibitem{MR1293654}
Boris~V. Fedosov.
\newblock A simple geometrical construction of deformation quantization.
\newblock {\em J. Differential Geom.}, 40(2):213--238, 1994.

\bibitem{MR1408526}
Alexander~V. Karabegov.
\newblock Deformation quantizations with separation of variables on a
  {K}\"ahler manifold.
\newblock {\em Comm. Math. Phys.}, 180(3):745--755, 1996.

\bibitem{MR1868597}
Alexander~V. Karabegov and Martin Schlichenmaier.
\newblock Identification of {B}erezin-{T}oeplitz deformation quantization.
\newblock {\em J. Reine Angew. Math.}, 540:49--76, 2001.

\bibitem{MR2062626}
Maxim Kontsevich.
\newblock Deformation quantization of {P}oisson manifolds.
\newblock {\em Lett. Math. Phys.}, 66(3):157--216, 2003.

\bibitem{MR1772294}
N.~Reshetikhin and L.~A. Takhtajan.
\newblock Deformation quantization of {K}\"ahler manifolds.
\newblock In {\em L. {D}. {F}addeev's {S}eminar on {M}athematical {P}hysics},
  volume 201 of {\em Amer. Math. Soc. Transl. Ser. 2}, pages 257--276. Amer.
  Math. Soc., Providence, RI, 2000.

\bibitem{MR1805922}
Martin Schlichenmaier.
\newblock Deformation quantization of compact {K}\"ahler manifolds by
  {B}erezin-{T}oeplitz quantization.
\newblock In {\em Conf\'erence {M}osh\'e {F}lato 1999, {V}ol. {II} ({D}ijon)},
  volume~22 of {\em Math. Phys. Stud.}, pages 289--306. Kluwer Acad. Publ.,
  Dordrecht, 2000.

\end{thebibliography}


\end{document}